\documentclass[aps,tightenlines,10pt,superscriptaddress]{revtex4}%
\usepackage{amssymb}
\usepackage{amsfonts}
\usepackage{amsmath}
\usepackage{amsmath}
\usepackage{pstricks}
\usepackage{graphicx}%
\setcounter{MaxMatrixCols}{30}
\providecommand{\U}[1]{\protect\rule{.1in}{.1in}}
\newtheorem{theorem}{Theorem}

\newtheorem{corollary}[theorem]{Corollary}

\newtheorem{definition}[theorem]{Definition}
\newtheorem{example}[theorem]{Example}

\newtheorem{lemma}[theorem]{Lemma}

\newtheorem{problem}[theorem]{Problem}
\newtheorem{proposition}[theorem]{Proposition}
\newtheorem{remark}[theorem]{Remark}

\newenvironment{proof}[1][Proof]{\noindent\textbf{#1.} }{\ \rule{0.5em}{0.5em}}
\begin{document}
\title{A notion of graph likelihood and an infinite monkey theorem}
\author{Christopher Banerji}
\affiliation{Department of Computer Science, and Centre of Mathematics \& Physics in the
Life Sciences and Experimental Biology, University College London, London WC1E
6BT, United Kingdom}
\affiliation{Statistical Cancer Genomics, Paul O'Gorman Building, UCL Cancer Institute,
University College London, London WC1E 6BT, United Kingdom}
\author{Toufik Mansour}
\affiliation{Department of Mathematics, University of Haifa, Haifa 31905, Israel}
\author{Simone Severini}
\affiliation{Department of Computer Science, and Department of Physics \& Astronomy,
University College London, WC1E 6BT London, United Kingdom}

\begin{abstract}
We play with a graph-theoretic analogue of the folklore infinite monkey
theorem. We define a notion of graph likelihood as the probability that a
given graph is constructed by a monkey in a number of time steps equal to the
number of vertices. We present an algorithm to compute this graph invariant
and closed formulas for some infinite classes. We have to leave the computational
complexity of the likelihood as an open problem.

\end{abstract}
\maketitle

\section{Introduction}

The \emph{infinite monkey theorem} is part of the popular culture \cite{wi}. A
monkey sits in front to a typewriter hitting random keys. The probability that
the monkey will type any given text tends to one, as the amount of time the
monkey spends on the typewriter tends to infinity. The usual example is of
Shakespeare's Hamlet. Of course, the term \textquotedblleft
monkey\textquotedblright\ can refer to some abstract device producing random
strings of symbols (\emph{e.g.}, zeros and ones).

In this note, we consider an infinite monkey theorem, but for graphs rather than strings. Our setting involves a device which performs
\textquotedblleft non-preferential\textquotedblright\ attachment \cite{js12}.
At time step $t+1$, a new vertex is added to a graph $G_{t}$ -- the process
starts from the single vertex graph, $G_{1}$. The degree and the neighbours of
the newly added vertex at step $t+1$ are both chosen at random. The degree of
the vertex is then $k\in\{0,1,...,t\}$ and its neighbours are $k$ random
vertices in $G_{t}$. The (in fact obvious) analogue of the infinite monkey
theorem is that every graph can be constructed in this way: the probability
that the monkey will construct a given graph tends to one, as the amount of
time the monkey spends on the \textquotedblleft graphwriter\textquotedblright%
\ tends to infinity. Notice that the monkey makes two random choices, but
these can be seen as a single one. Also, notice that the theorem is indeed a
corollary of the usual infinite monkey theorem since we could encode a graph
in a string (for example, by vectorizing the adjacency matrix). Below is a
monkey enjoying the construction of the Petersen graph: %

\begin{figure}[h]
\centering
\includegraphics[height=3.3497in,
width=2.3383in]{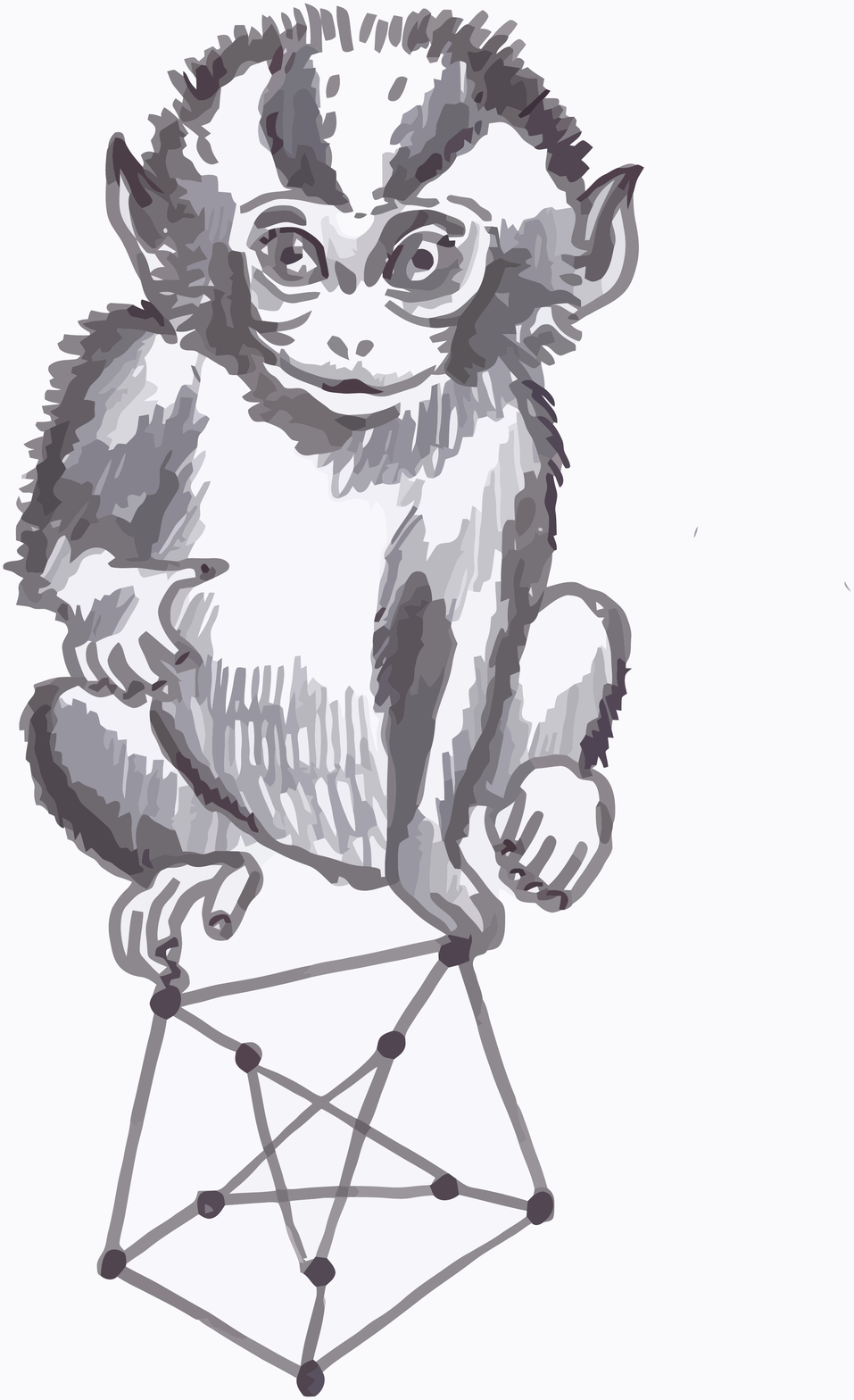}
\end{figure}

The construction is basically an excuse to discuss a graph invariant which we
call \emph{(graph) likelihood}. This is the probability that a given graph on
$t$ vertices is obtained by the construction after exactly $t$ steps. In other
words, this is the probability that a monkey constructs a given graph on $t$
vertices in exactly $t$ seconds, assuming that the monkey adds a new vertex
each second. For a string, this would correspond to the probability that the
monkey types a given text in a time equal to the length of the string produced.

The likelihood is a plausible measure to quantify how difficult is to
construct a graph in the way we propose. Intuitively, graphs with more
symmetries have generally smaller likelihood. We will show as expected that
bounds on the likelihood can be given in terms of the automorphism group.
Specifically, the likelihood can not be larger than the reciprocal of the
number of automorphisms. Graphs with trivial automorphism group are then
potentially the ones admitting highest likelihood. We will describe an
algorithm to compute the likelihood of a given graph. The algorithm uses a
rooted tree decomposition which takes into account all possible ways to
construct the graph by adding one vertex at the time. The algorithm suggests a
closed formula for the likelihood.

The reminder of the paper is organized as follows. In Section II, we define
the likelihood. In Section III, we give closed formulas for complete graphs,
star graphs, paths, and cycles. In Section IV, we describe the algorithm.
Section V lists some open problems. In particular, we could not to prove the
complexity of computing the likelihood. The paper is practically
self-contained.

\section{Graph likelihood}

As usual, $G=(V,E)$ denotes a \emph{graph}: $V(G)=\{v_{1},v_{2},...,v_{t}\}$
is a set whose elements are called \emph{vertices} and $E(G)\subseteq
V(G)\times V(G)-\{\{v_{i},v_{i}\}:v_{i}\in V(G)\}$ is a set whose elements are
called \emph{edges}. The graph with a single vertex and no edges is denoted by
$K_{1}$. Our main object of study will be the construction given in the
following definition. This is a special case of a construction already
presented in \cite{js12}.

\begin{definition}
[Construction]\label{def1}We construct a graph $G_{t}=(V,E)$, starting from
$G_{1}=K_{1}$. The construction involves an iteration with discrete steps. At
the $t$-th step of the iteration, the graph $G_{t-1}$ is transformed into the
graph $G_{t}$. The $t$-th step of the iteration is divided into three substeps:

\begin{enumerate}
\item We select a number $k\in\{0,1,...,t-1\}$ with equal probability.

Assume that we have selected $k$.

\item We select $k$ vertices of $G_{t-1}$ with equal probability.

Assume that we have selected the vertices $v_{1},v_{2},...,v_{k}\in
V(G_{t-1})$.

\item We add a new vertex $t$ to $G_{t-1}$ and the edges $\{v_{1}%
,t\},\{v_{2},t\},...,\{v_{k},t\}\in E(G_{t})$.
\end{enumerate}
\end{definition}

On the basis of the construction, the following definition is natural:

\begin{definition}
[Graph likelihood]Let $G$ be a graph on $t$ vertices. The (\emph{graph}%
)\ \emph{likelihood} of $G$, denoted by $\mathcal{L}(G)$, is defined as the
probability that $G_{t}=G$, where $G_{t}$ is the graph given by the
construction in Definition \ref{def1}:
\[
\mathcal{L}(G):=\emph{Pr}[G_{t}=G].
\]

\end{definition}

For clarifying this notion, in the next section, we write closed formulas for
the likelihood of graphs in some infinite families. We use rather
uninteresting proof techniques, but these serve the purpose, at least for very
simple graphs.

\section{Examples}

The \emph{complete graph} $K_{t}$ is defined as the graph on $t$ vertices and
$t(t-1)/2$ edges.

\begin{proposition}
Let $K_{t}$ be the complete graph on $t$ vertices. Then, $\mathcal{L}%
(K_{t})=1/t!$.
\end{proposition}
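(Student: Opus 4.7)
The plan is to exploit two immediate features of the construction in Definition \ref{def1}: edges are added only when a new vertex arrives, and no step ever deletes an edge. Consequently, if $G_t = K_t$, then at every earlier stage $G_s$ must already be the complete graph on its $s$ vertices, and the vertex introduced at step $s$ must become adjacent to each of the $s-1$ vertices already present. Any deviation at any step leaves a missing edge that no subsequent step can repair, so there is a unique ``trajectory'' $K_1 \to K_2 \to \cdots \to K_t$ compatible with the target graph.

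Under this forced trajectory I would compute the per-step probability directly. At the $s$-th step, substep 1 must select $k = s-1$ from the $s$ equiprobable values $\{0,1,\ldots,s-1\}$, which contributes a factor $1/s$. Substep 2 then has a unique admissible outcome, namely the set of all $s-1$ vertices of $G_{s-1}$, which has probability $1/\binom{s-1}{s-1} = 1$. Hence the probability of executing the only correct transition $G_{s-1}\to G_s$ is exactly $1/s$.

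Since the random choices made at distinct steps are independent by construction, multiplying these factors over $s = 2,3,\ldots,t$ yields
\[
\mathcal{L}(K_t) \;=\; \prod_{s=2}^{t}\frac{1}{s} \;=\; \frac{1}{t!}.
\]
There is no substantive obstacle: the only point that merits a line of justification is the uniqueness of the trajectory, which is simply the remark that $K_t$ realises the maximum possible number of edges, so no edge may be skipped at any intermediate stage.
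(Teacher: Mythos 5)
Your proof is correct and follows essentially the same route as the paper: the only probabilistic freedom at each step is the choice of $k$, which must equal $s-1$ with probability $1/s$, and the vertex selection is then forced, giving $\prod_s 1/s = 1/t!$. Your proposal merely makes explicit (usefully, but not differently) the forced-trajectory argument that the paper leaves implicit.
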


\begin{proof}
For $K_{t}$, the only significant step of the construction is the first one
(\emph{i.e.}, the selection of a number $k\in\{0,1,...,t-1\}$ with equal
probability). Therefore, $\mathcal{L}(K_{t})=\prod_{i=1}^{t}\frac{1}{i}$. This
equals $1/t!$ by definition.
\end{proof}

\bigskip

The \emph{star graph} $K_{1,t-1}$ is defined as the graph on $t$ vertices,
$v_{1},v_{2},...,v_{t}$, and the edges $\{v_{1},v_{2}\},\{v_{1},v_{3}%
\},...,\{v_{1},v_{t-1}\}$. In a graph $G=(V,E)$, the \emph{degree} of a vertex
$i\in V(G)$ is defined and denoted by $d(i)=\left\vert \{j:\{i,j\}\in
E(G)\right\vert $.

\begin{proposition}
Let $K_{1,t-1}$ be the star graph on $t$ vertices. Then, $\mathcal{L}%
(K_{1,t-1})=\frac{t}{(t!)^{2}}\sum_{i=0}^{t-1}i!$.
\end{proposition}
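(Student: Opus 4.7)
The plan is to condition on the step $c \in \{1, 2, \ldots, t\}$ at which the centre of the star is introduced by the construction. Since the final graph has exactly one vertex of degree $t-1$ and $t-1$ leaves, every realisation of $G_t \cong K_{1,t-1}$ pins down a unique such $c$, and distinct values of $c$ correspond to distinct labelled outputs; the contributions therefore sum without overlap.

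For a fixed $c$, I would compute the probability of the trajectory ``the centre is born at step $c$'' by multiplying the three-substep probabilities across all $t$ steps. For $s < c$ the newly added vertex must arrive isolated (otherwise it would acquire a neighbour before the centre exists, and would end up with degree $\geq 2$), which forces $k = 0$ and contributes $\tfrac{1}{s}$ for each $s \geq 2$; the cumulative factor from these steps is $\tfrac{1}{(c-1)!}$. At step $c$ the centre must be born adjacent to every existing vertex, so $k = c-1$ (probability $\tfrac{1}{c}$) and the neighbour choice is forced. For $s > c$ each new vertex must be a leaf attached to the centre, contributing $\tfrac{1}{s} \cdot \tfrac{1}{s-1}$, i.e.\ $k=1$ times the selection of the unique centre among $s-1$ candidates.

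Multiplying these factors gives $\tfrac{1}{(c-1)!} \cdot \tfrac{1}{c} \cdot \prod_{s=c+1}^{t} \tfrac{1}{s(s-1)}$. The last product telescopes to $\tfrac{c!(c-1)!}{t!(t-1)!}$, so after cancellation each fixed-$c$ trajectory contributes $\tfrac{(c-1)!}{t!(t-1)!}$. Summing over $c = 1, \ldots, t$ and using the elementary identity $\tfrac{1}{t!(t-1)!} = \tfrac{t}{(t!)^2}$ delivers the claimed expression $\tfrac{t}{(t!)^2} \sum_{i=0}^{t-1} i!$.

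The only delicate points are at the boundaries: for $c = 1$ the ``pre-centre'' product is empty, for $c = t$ the ``post-centre'' product is empty, and one should verify the telescoping of $\prod_{s=c+1}^{t} \tfrac{1}{s(s-1)}$ carefully. Beyond this minor bookkeeping, there is no serious obstacle; the main conceptual step is simply recognising that the birth-time of the high-degree vertex uniquely indexes the admissible construction sequences.
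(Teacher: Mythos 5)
Your proof is correct and takes essentially the same route as the paper: the paper's three cases are exactly your $c=t$, $c\in\{1,2\}$, and $3\le c\le t-1$, with the same forced step probabilities, so your parametrization by the centre's birth step is just a cleaner bookkeeping of the same sum $\sum_{c=1}^{t}\frac{(c-1)!}{t!\,(t-1)!}$. (Both arguments implicitly assume $t\ge 3$, where the centre of the star is unique.)
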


\begin{proof}
The star graph $K_{1,t-1}$ has $1$ vertex of degree $t-1$ and $t-1$ vertices
of degree $1$. There are three cases relevant to the construction of
$K_{1,t-1}$ such that $G_{t}=K_{1,t-1}$:

\begin{enumerate}
\item Suppose we add $t-1$ vertices, $1,2,...,t-1$, of degree $0$. At time
$t$, we add a vertex, $t$, of degree $t-1$. Since $\Pr[d(i)=0]=1/i$, for
$i=1,2,\ldots,t-1$, and $\Pr[d(t)=t-1]=1/t$, $\Pr[G_{t}=K_{1,t-1}$ by
(1)$]=\prod_{i=1}^{t-1}\Pr[d(i)=0]\cdot\Pr[d(t)=t-1]=\left(  \prod_{i=1}%
^{t-1}\frac{1}{i}\right)  \cdot\frac{1}{t}=\frac{1}{t!}$.

\item Suppose there is an edge $\{1,2\}\in G_{2}$. Since $\Pr[d(3)=1]=1/3$, we
distinguish two cases:

\begin{enumerate}
\item $\Pr[\{1,3\}\in E(G_{3})]=1/2$: $\Pr[d(2)=1]\cdot\Pr[d(3)=1]\cdot
\Pr[\{1,3\}\in E(G_{3})]=\frac{1}{2}\cdot\frac{1}{3}\cdot\frac{1}{2}=\frac
{1}{12}$. If $\{1,3\}\in E(G_{3})$ then each other edge of $G_{t}$, with
$t\geq4$, must be of the form $\{1,4\},\ldots,\{1,t\}$ and $\Pr[\{1,t\}\in
E(G_{3})]=\frac{1}{t}\cdot\frac{1}{t-1}$.

\item $\Pr[\{2,3\}\in E(G_{3})]=1/2$: $\Pr[d(2)=1]\cdot\Pr[d(3)=1]\cdot
\Pr[\{2,3\}\in E(G_{3})]=\frac{1}{2}\cdot\frac{1}{3}\cdot\frac{1}{2}=\frac
{1}{12}$. If $\{2,3\}\in E(G_{3})$ then the situation is analogous to the
previous case.
\end{enumerate}

By combining together (a) and (b), it follows that
\[
\Pr[G_{t}=K_{1,t-1}\text{ by (2)}]=2\prod_{i=2}^{t}\left(  \frac{1}{i}%
\cdot\frac{1}{i-1}\right)  =\frac{2}{t!(t-1)!}.
\]

\item Suppose we add $k-1$ vertices, $1,2,\ldots,k-1$, of degree $0$, where
$k\geq3$. At time $k$, we add a vertex, $k$, of degree $k-1$. Since
$\Pr[d(i)=0]=1/i$, for $i=1,2,\ldots,k-1$, and $\Pr[d(k)=k-1]=1/k$,
\[
\Pr[G_{k}=K_{1,k-1}\text{ by (3)}]=\prod_{i=1}^{t-1}\Pr[d(i)=0]\cdot
\Pr[d(k)=k-1]=\left(  \prod_{i=1}^{k-1}\frac{1}{i}\right)  \cdot\frac{1}%
{k}=\frac{1}{k!}.
\]
The remaining $t-k$ vertices, $k+1,k+2,\ldots,t$, must be of the form
$\{k,k+1\},\{k,k+2\},\ldots,\{k,t\}$ and
\[
\Pr[\{k,k+j\}\in E(G_{k+j})]=\frac{1}{k+j}\cdot\frac{1}{k+j-1},
\]
for each $j=1,2,\ldots,t-k$. Hence,
\[
\Pr[G_{t}=K_{1,t-1}\text{ by (3)}]=\sum_{k=3}^{t-1}\frac{1}{k!}\prod
_{i=k+1}^{t}\left(  \frac{1}{i}\cdot\frac{1}{i-1}\right)  .
\]

\end{enumerate}

The analysis carried out with the three cases above is sufficient to obtain
the following formula:
\begin{align*}
\mathcal{L}(K_{1,t-1})  &  =\Pr[G_{t}=K_{1,t-1}\text{ by (1)}]+\Pr
[G_{t}=K_{1,t-1}\text{ by (3)}]+\Pr[G_{t}=K_{1,t-1}\text{ by (2)}]\\
&  =\frac{1}{t!}+\frac{2}{t!(t-1)!}+\sum_{k=3}^{t-1}\frac{1}{k!}\prod
_{i=k+1}^{t}\frac{1}{i(i-1)}=\frac{t}{(t!)^{2}}\sum_{i=0}^{t-1}i!.
\end{align*}

\end{proof}

\section{Computation of the likelihood}

Is the likelihood defined for any graph? The answer is \textquotedblleft
yes\textquotedblright, as demonstrated by the next statement. This is a
plausible graph-theoretic analogue of the infinite monkey theorem:

\begin{proposition}
\label{procon}Any graph can be obtained with the construction in Definition
\ref{def1}.
\end{proposition}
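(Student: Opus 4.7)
The plan is to show, by explicit construction, that for any target graph $G$ on $t$ vertices there is a sequence of choices in the procedure of Definition \ref{def1} whose joint probability is strictly positive, hence $\mathcal{L}(G)>0$ and in particular $G$ is reachable.

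First I would fix an arbitrary ordering $v_{1},v_{2},\ldots,v_{t}$ of the vertices of $G$. I then interpret the construction as a process that, at step $i$, has already built the induced subgraph $G_{i-1}=G[\{v_{1},\ldots,v_{i-1}\}]$, and at step $i$ must introduce $v_{i}$ together with exactly the edges $\{v_{i},v_{j}\}\in E(G)$ for $j<i$. Let $k_{i}=|\{j<i:\{v_{i},v_{j}\}\in E(G)\}|$ be the back-degree of $v_{i}$. Then $k_{i}\in\{0,1,\ldots,i-1\}$, so in substep (1) of step $i$ the specific value $k_{i}$ is selected with probability $1/i>0$; and in substep (2), the specific set of $k_{i}$ neighbours prescribed by $G$ is selected with probability $1/\binom{i-1}{k_{i}}>0$.

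Multiplying these probabilities over $i=1,\ldots,t$, the probability of producing exactly this sequence of graphs $G_{1},G_{2},\ldots,G_{t}=G$ is
\[
\prod_{i=1}^{t}\frac{1}{i}\cdot\frac{1}{\binom{i-1}{k_{i}}}>0,
\]
and this is a lower bound on $\mathcal{L}(G)=\Pr[G_{t}=G]$. Hence every graph arises with positive probability, which establishes the proposition.

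There is essentially no obstacle: the only thing to be careful about is that in a single run of the process the vertices of $G_{t}$ are labelled by the order of arrival, so one should state clearly that "obtained" is meant up to relabelling, i.e.\ one fixes any bijection from $\{1,\ldots,t\}$ to $V(G)$ and forces the monkey to realise it. This is the same convention tacitly used in the earlier worked examples for $K_{t}$ and $K_{1,t-1}$, so it fits seamlessly into the paper's setup.
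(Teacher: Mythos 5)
Your proof is correct and takes essentially the same route as the paper: fix an ordering of the vertices of $G$ and let the process add them in that order, with each new vertex selecting exactly its back-neighbours, an event of positive probability at every step. The only difference is cosmetic — the paper obtains its ordering via an acyclic orientation and a topological ordering (machinery that is not really needed for an undirected graph, where any ordering works), whereas you take an arbitrary ordering directly and record the explicit lower bound $\prod_{i=1}^{t}1/\bigl(i\binom{i-1}{k_i}\bigr)$, which also anticipates the paper's subsequent remark that $\mathcal{L}(G)>0$ for every $G$.
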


\begin{proof}
An \emph{orientation} of $G$ is a function $\alpha:E(G)\longrightarrow
E^{+}(G)$, where $E^{+}(G)$ is a set whose elements, called \emph{arcs}, are
ordered pairs of vertices such that either $\alpha(\{i,j\})=(i,j)$ or
$\alpha(\{i,j\})=(j,i)$, for each $\{i,j\}\in E(G)$. An orientation is
\emph{acyclic} if it does not contain any directed cycles, \emph{i.e.},
distinct vertices $v_{1},...,v_{k}$ such that $(v_{1},v_{2}),(v_{2}%
,v_{3}),...,(v_{k-1},v_{k}),(v_{k},v_{1})$ are arcs. Clearly, every graph has
an acyclic orientation. Every acyclic orientation determines at least one
linear ordering $v_{1}<v_{2}<\cdots<v_{n}$ of the vertices such that, for each
edge $\{v_{i},v_{j}\}$, we have $\alpha(\{v_{i},v_{j}\})=(v_{i},v_{j})$ if and
only if $v_{i}<v_{j}$. This is also called a \emph{topological ordering} of
the vertices relative to the orientation. For a graph $G$, let $V(G)=\{w_{1}%
,w_{2},...,w_{t}\}$ and let $w_{1}<w_{2}<\cdots<w_{t}$ realize a topological
ordering. We can always obtain $G_{t}=G$, if in the iteration we have
$v_{1}=w_{1},v_{2}=w_{2},...,v_{t}=w_{t}$.
\end{proof}

\bigskip

And, of course:

\begin{proposition}
Every graph on $n$ vertices has a positive likelihood. (More formally,
$\mathcal{L}(G)>0$ for any graph $G$.)
\end{proposition}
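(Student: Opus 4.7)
The plan is to derive this as an immediate corollary of Proposition \ref{procon}, by observing that the construction has only finitely many substeps and that each substep has strictly positive probability conditioned on the previous ones. Concretely, at step $i$ of the construction the random degree $k\in\{0,1,\ldots,i-1\}$ is chosen with probability $1/i$, and conditional on $k$, any fixed set of $k$ neighbors in $G_{i-1}$ is chosen with probability $1/\binom{i-1}{k}$. Both quantities are strictly positive real numbers for every admissible $i$ and $k$.

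Next, I would invoke Proposition \ref{procon}: for any graph $G$ on $t$ vertices there exists at least one sequence of choices (coming from a topological ordering of an acyclic orientation of $G$) such that the construction produces $G_t=G$. Denote the degrees added along such a sequence by $k_1,k_2,\ldots,k_t$ with $k_i\in\{0,1,\ldots,i-1\}$. Then the probability of this particular realization of the process is
\[
\prod_{i=1}^{t}\frac{1}{i\,\binom{i-1}{k_i}}>0,
\]
which is a strictly positive lower bound for $\mathcal{L}(G)$, since $\mathcal{L}(G)=\Pr[G_t=G]$ is a sum of nonnegative probabilities that includes this one as a summand.

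There is no real obstacle here: the work has been done in Proposition \ref{procon}, which already exhibits a valid sequence of choices producing $G$. The only thing to verify is that every choice in the construction has positive probability, which is immediate from the uniform distributions prescribed in Definition \ref{def1}. Thus the proof amounts to combining these two observations, yielding $\mathcal{L}(G)>0$ for every graph $G$.
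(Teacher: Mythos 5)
Your proof is correct and matches the paper's intended reasoning: the paper states this proposition without proof (``And, of course:'') as an immediate consequence of Proposition \ref{procon}, and your argument --- lower-bounding $\mathcal{L}(G)$ by the probability $\prod_{i=1}^{t}\frac{1}{i\binom{i-1}{k_i}}>0$ of one explicit realization coming from a topological ordering --- is exactly the argument being left implicit, consistent with the per-path probability formula appearing later before Theorem \ref{thflg}.
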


Proposition \ref{procon} suggests a natural computational problem:

\begin{problem}
[Likelihood computation]\label{pro1}\textbf{Given:} A graph $G$.
\textbf{Task:} Compute $\mathcal{L}(G)$.
\end{problem}

There are surely many ways to approach this problem. We consider an algorithm
based on a tree whose vertices represent all intermediate graphs obtained
during the construction.

\begin{definition}
[Identity representation]Let $G=(V,E)$ be a graph on the set of vertices
$V(G)=\{v_{1},v_{2},...,v_{t}\}$. Let us fix an arbitrary labeling of the
vertices of $G$ by a bijection $f:V(G)\longrightarrow\{1,2,...,t\}$. Once
fixed the bijection, let us label the first row (resp. column)\ of the
adjacency matrix of $G$, $A(G)$, by the number $t$, the second one by
$t-1$,..., the last one by $1$. The bijection $f$ can be then represented by
the ordered set $(1,2,...,t)$. We can then define an acyclic orientation of
the edges such that $\alpha(\{i,j\})=(i,j)$ if and only if $i<j$, with
$i,j=1,2,...,t$. The topological ordering relative to the orientation defines
$G_{1}=(1,\emptyset)$, $G_{2}=\{\{1,2\},E(G_{2})\}$,...,$G_{t}%
=\{\{1,2,...,t\},E(G_{t}))=G$. The pair $(A(G),(1,2,...,t))$ given by the
adjacency matrix $A(G)$ together with the ordered set \emph{id}
$:=(1,2,...,t)$ is said to be the \emph{identity representation} of $G$.
\end{definition}

\begin{remark}
The identity representation is arbitrary, since it entirely depends on the
bijection $f$.
\end{remark}

A \emph{permutation of length} $n$ is a bijection
$p:\{1,2,...,t\}\longrightarrow\{1,2,...,t\}$. Hence, each permutation $p$
corresponds to an ordered set $(p(1),p(2),...,p(t))$. The set of all
permutations of length $t$ is denoted by $S_{t}$. A \emph{permutation matrix}
$P$ induced by a permutation $p$ of length $t$ is an $t\times t$ matrix such
that $[P]_{i,j}=1 $ if $p(i)=j$ and $[P]_{i,j}=0$, otherwise. Lower case
letters denote permutations; upper case letters their induced matrices.

\begin{definition}
[(Generic) Representation]Let $G=(V,E)$ be a graph on $t$ vertices. Let
$(A(G),$ \emph{id}$)$ be the identity representation of $G$. The pair
$(PA(G)P^{T},p)$, where $P$ is a permutation matrix induced by the permutation
$p$ is said to be a \emph{representation} of $G$. A representation
$(PA(G)P^{T},p)$ is also denoted by $A_{p}(G)$.
\end{definition}

An \emph{automorphism} of a graph $G=(V,E)$ is a permutation
$p:V(G)\longrightarrow V(G)$ such that $\{v_{i},v_{j}\}\in E(G)$ if and only
if $\{p(v_{i}),p(v_{j})\}\in E(G)$. The set of all automorphisms of $G$, with
the operation of composition of permutations \textquotedblleft$\circ
$\textquotedblright, is a permutation group denoted by Aut$(G)$. Such a group
is the \emph{full automorphism group} of $G$. The permutation matrices $P$,
induced by the elements of Aut$(G)$, are precisely the matrices such that
$PA(G)P^{T}=A(G)$, \emph{i.e.}, $PA(G)=A(G)P$.

\begin{lemma}
Let $G=(V,E)$ be a graph on $t$ vertices. The total number of different
representations of $G$ is $t!/\left\vert \text{\emph{Aut}}(G)\right\vert $.
\end{lemma}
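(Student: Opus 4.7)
The plan is to apply the orbit--stabilizer theorem to the natural action of the symmetric group $S_{t}$ on the set of $t\times t$ symmetric $0/1$ matrices defined by $p\cdot A:=PAP^{T}$, where $P$ is the permutation matrix induced by $p$. Under this action, a representation $A_{p}(G)$ is simply the image $p\cdot A(G)$; two representations should be counted as equal precisely when their underlying relabeled adjacency matrices agree, so the set of distinct representations of $G$ is exactly the $S_{t}$-orbit of $A(G)$.

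First I would check that $p\mapsto PAP^{T}$ does define a group action on symmetric matrices, which is routine using $P_{p\circ q}=P_{p}P_{q}$ (for the standard composition convention) and $P^{-1}=P^{T}$ for permutation matrices. Next I would identify the stabilizer of $A(G)$: by definition $p$ fixes $A(G)$ iff $PA(G)P^{T}=A(G)$, equivalently $PA(G)=A(G)P$. The sentence immediately preceding the lemma observes that these are exactly the permutation matrices induced by the elements of $\text{Aut}(G)$, so the stabilizer is $\text{Aut}(G)$. The orbit--stabilizer theorem then yields $|S_{t}|/|\text{Aut}(G)|=t!/|\text{Aut}(G)|$ distinct representations, as claimed.

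The proof has no real technical obstacle; the only delicate point is interpretive. Reading the pair $(PA(G)P^{T},p)$ naively, one would trivially obtain $t!$ representations indexed by $p$, and the lemma would be false. The lemma therefore tacitly identifies two representations with the same matrix coordinate, and once this convention is fixed everything reduces to the orbit--stabilizer identity above.
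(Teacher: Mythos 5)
Your proposal is correct and is essentially the paper's own argument: the paper counts representations by partitioning $S_{t}$ into the (left) cosets of $\mathrm{Aut}(G)$, which is exactly the orbit--stabilizer count for the action $p\cdot A(G)=PA(G)P^{T}$ with stabilizer $\mathrm{Aut}(G)$ that you use, and your interpretive point that two representations are identified when their matrices coincide is precisely the paper's tacit convention (one representation per equivalence class/coset). So this is the same route, just phrased via orbit--stabilizer rather than explicitly listing the cosets.
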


\begin{proof}
Let $A_{\text{id}}(G)$ be an identity representation of $G$. By the definition
of full automorphism group, for each permutation $p\in$ Aut$(G)$, we have
$A_{p}(G)=PA_{\text{id}}(G)P^{T}=A_{\text{id}}(G)$. Let $q\in S_{t}-$
Aut$(G)$. Then, there is a unique permutation $r\in S_{t}-$ Aut$(G)$ such that
$q=p\circ r$. It follows that $QA_{\text{id}}(G)Q^{T}=PRA_{\text{id}}%
(G)R^{T}P^{T}=PA_{r}(G)P^{T}=A_{r}(G)$. This indicates that each
representation of $G$ belongs to an equivalence class of representations.
Since $\left\vert S_{t}\right\vert =t!$, the total number of different
representations of $G$, \emph{i.e.}, the total number of equivalence classes
of representations, is $t!/\left\vert \text{Aut}(G)\right\vert $.
\end{proof}

\begin{remark}
In the language of elementary group theory, the equivalence classes are the
(left)\ cosets of the subgroup \emph{Aut}$(G)$ in $S_{t}$.
\end{remark}

In order to design an algorithm for $\mathcal{L}(G)$, we need some further
definitions. A \emph{subgraph} $H=(V^{\prime},E^{\prime})$ of a graph
$G=(V,E)$ is a graph such that $V^{\prime}\subseteq V$ and $E^{\prime
}\subseteq E$. We say that a graph $G$ \emph{contains} a graph $H$ if there is
a subgraph of $G$ isomorphic to $H$.

\begin{definition}
Let $G$ be any nonempty graph with $t$ vertices, a \emph{path construction} of
$G$ is a sequence $(H_{1},H_{2},\ldots,H_{t})$ of $t$ graphs such that $H_{i}$
has $i$ vertices, $i=1,2,\ldots,t$, and $H_{i}\subset H_{i+1}$, for each
$i=1,2,\ldots,t-1$; moreover, $H_{t}\cong G$. We denote the set of all path
constructions of a graph $G$ by Path$(G)$.
\end{definition}

It is clear that each path construction corresponds to an equivalence class of representations.

The set of path constructions can be represented as a rooted tree $T_{G}$ as follows:

\begin{itemize}
\item The root of $T_{G}$ is $T_{1}$. This is the empty graph with a single vertex.

\item Assume we already have all the vertices at level $i$ (the level of the
root is taken to be $1$) in the tree $T_{G}$. Let $(T_{1},T_{2},\ldots,T_{i})$
be a path in $T_{G}$, if there exists a path construction $L=(T_{1}%
,T_{2},\ldots,T_{i},H_{i+1},\ldots,H_{t})\in$ Path$(G)$ then we define
$H_{i+1}$ to be one of the children of the node $T_{i}$ in $T_{G}$.
\end{itemize}

\begin{example}
\label{ext1}The rooted tree $T_{P_{3}}$ is given by \begin{figure}[pth]
\begin{center}
\begin{pspicture}(0,-1)(4.4,2.6)
\pscircle*(1.5,2.2){0.08}\put(1.35,2.4){\small$v_1$}\put(0.6,2.2){\small$(T_1)$}
\psline(1.5,2)(.2,1)\psline(1.5,2)(2.5,1)
\pscircle*(-.2,0.55){0.08}\put(-.35,.7){\small$v_1$}\pscircle*(.5,0.55){0.08}\put(.35,.7){\small$v_2$}\put(-1.3,.6){\small$(T_{21})$}
\put(2.5,0){\pscircle*(-.2,0.55){0.08}\put(-.35,.7){\small$v_1$}\pscircle*(.5,0.55){0.08}\put(.35,.7){\small$v_2$}\psline(-.2,0.55)(.5,.55)
\put(1,.6){\small$(T_{22})$}}
\psline(.2,.3)(.2,-.6)\psline(2.6,.3)(2.3,-.5)\psline(2.6,.3)(4.3,-.5)
\put(-.5,-1.5){\pscircle*(-.2,0.55){0.08}\put(-.35,.7){\small$v_1$}\pscircle*(.5,0.55){0.08}\put(.35,.7){\small$v_3$}
\pscircle*(1.2,0.55){0.08}\put(1.05,.7){\small$v_2$}\psline(1.2,0.55)(.5,.55)\psline(-.2,0.55)(.5,.55)\put(.1,.1){\small$(T_{31})$}}
\put(1.8,-1.5){\pscircle*(-.2,0.55){0.08}\put(-.35,.7){\small$v_1$}\pscircle*(.5,0.55){0.08}\put(.35,.7){\small$v_2$}
\pscircle*(1.2,0.55){0.08}\put(1.05,.7){\small$v_3$}\psline(1.2,0.55)(.5,.55)\psline(-.2,0.55)(.5,.55)\put(.1,.1){\small$(T_{32})$}}
\put(4,-1.5){\pscircle*(-.2,0.55){0.08}\put(-.35,.7){\small$v_3$}\pscircle*(.5,0.55){0.08}\put(.35,.7){\small$v_1$}
\pscircle*(1.2,0.55){0.08}\put(1.05,.7){\small$v_2$}\psline(1.2,0.55)(.5,.55)\psline(-.2,0.55)(.5,.55)\put(.1,.1){\small$(T_{33})$}}
\end{pspicture}
\end{center}
\end{figure}

The above figure shows that the set of path constructions of $P_{3}$ is given
by
\[
\emph{Path}(P_{3})=\{(T_{1},T_{21},T_{31}),(T_{1},T_{22},T_{32}),(T_{1}%
,T_{22},T_{33})\}.
\]

\end{example}

Let $P=(H_{1},H_{2},\ldots,H_{t})\in$ Path$(G)$ be any path construction of
$G$. Fix $i$, then $H_{i+1}$ is obtained by adding a vertex $v_{i+1}$ of
degree $d_{i+1}(P)$ to the graph $H_{i}$. Hence
\[
\Pr[G_{t}=G,\,P\mbox{ is a path construction of }G]=\prod_{i=1}^{t}\frac
{1}{i\binom{i-1}{d_{i}(P)}}=\frac{1}{t!\prod_{i=1}^{t}\binom{i-1}{d_{i}(P)}}.
\]

From this algorithm, we obtain a relation between $\mathcal{L}(C_{n})$ and
$\mathcal{L}(P_{n})$ as follows. Recall that $C_{n}$ is the cycle on $n$
vertices and $P_{n}$ is the path on $n$ vertices.

\begin{corollary}
For all $n\geq3$, $\mathcal{L}(C_{n})=\mathcal{L}(P_{n-1})/n\binom{n-1}{2}$.
\end{corollary}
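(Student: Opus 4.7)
The plan is to exhibit a bijection between path constructions of $C_n$ and path constructions of $P_{n-1}$ obtained by dropping, respectively adjoining, the final vertex in the construction, and then to read off the claim by applying the formula
\[
\Pr[G_t = G,\, P \text{ a path construction}] = \frac{1}{t!\prod_{i=1}^t \binom{i-1}{d_i(P)}}
\]
to matched pairs.

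First I would observe that if $P = (H_1, \ldots, H_n) \in \text{Path}(C_n)$, then $H_{n-1}$ is obtained from $H_n \cong C_n$ by deleting the last-added vertex $v_n$ and its incident edges. Since $C_n$ is $2$-regular, this forces $d_n(P) = 2$; and since removing any single vertex of $C_n$ yields a path on $n-1$ vertices, we get $H_{n-1} \cong P_{n-1}$, so $(H_1,\ldots,H_{n-1}) \in \text{Path}(P_{n-1})$. Conversely, given any $P' = (H_1, \ldots, H_{n-1}) \in \text{Path}(P_{n-1})$, there is a unique extension to a path construction of $C_n$: adjoin a new vertex $v_n$ joined to the two endpoints of the path $H_{n-1}$ (any other pair of neighbours would fail to produce $C_n$). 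This gives a bijection $\phi : \text{Path}(C_n) \to \text{Path}(P_{n-1})$ under which $d_i(P) = d_i(\phi(P))$ for $i = 1, \ldots, n-1$ and $d_n(P) = 2$.

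Applying the displayed formula to $P$ and to $\phi(P)$ and taking the ratio gives
\[
\Pr[G_n = C_n,\, P] = \frac{(n-1)!}{n!\binom{n-1}{2}}\,\Pr[G_{n-1} = P_{n-1},\, \phi(P)] = \frac{1}{n\binom{n-1}{2}}\,\Pr[G_{n-1} = P_{n-1},\, \phi(P)].
\]
Summing this identity over $P \in \text{Path}(C_n)$ and using $\mathcal{L}(G) = \sum_{P \in \text{Path}(G)} \Pr[G_t = G,\, P]$ yields $\mathcal{L}(C_n) = \mathcal{L}(P_{n-1})/\bigl(n\binom{n-1}{2}\bigr)$.

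I do not anticipate any real obstacle: the structural content is just the $2$-regularity of $C_n$ together with the fact that $C_n$ minus a vertex is a path. The only point worth checking is that the bijection respects the combinatorial data $d_i(P)$ entering the probability formula, but this is immediate because the sequences $H_1, \ldots, H_{n-1}$ coincide on the two sides.
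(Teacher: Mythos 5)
Your proof is correct and is exactly the argument the paper leaves implicit: the corollary is stated as an immediate consequence of the path-construction probability formula, and your bijection (truncate a path construction of $C_n$ at the penultimate step, which must be $P_{n-1}$ with the last vertex added at degree $2$, and conversely extend uniquely by joining a new vertex to the two endpoints) is precisely the reasoning intended, with the factor $\frac{1}{n\binom{n-1}{2}}$ coming from the extra term $\frac{1}{n\binom{n-1}{2}}$ in the product. A quick sanity check against the paper's table ($\mathcal{L}(C_3)=\tfrac16=\tfrac12\cdot\tfrac13$, $\mathcal{L}(C_4)=\tfrac1{36}=\tfrac13\cdot\tfrac1{12}$) confirms the formula.
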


An algorithm for computing $\mathcal{L}(G)$ can be based on the following theorem:

\begin{theorem}
\label{thflg} Let $G$ be a graph on $t$ vertices. Then
\[
\mathcal{L}(G)=\sum_{P\in\text{ \emph{Path}}(G)}\frac{1}{t!\prod_{i=1}%
^{t}\binom{i-1}{d_{i}(P)}}.
\]

\end{theorem}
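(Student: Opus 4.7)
The plan is to recognize the identity as an instance of the law of total probability: $\mathcal{L}(G)=\Pr[G_{t}=G]$ is the sum of the probabilities of the monkey's individual realizations that produce $G$, and these realizations are exactly the path constructions of $G$. So I would write
\[
\mathcal{L}(G)=\sum_{P\in\text{Path}(G)}\Pr[\text{the monkey realizes the path construction }P]
\]
and then evaluate each term on the right explicitly.

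To justify this partition, I would invoke the bijection, already noted in the paper, between elements of $\text{Path}(G)$ and the equivalence classes of representations of $G$. A single realization of Definition~\ref{def1} is fully determined by the labeled sequence $(G_1,G_2,\ldots,G_t)$ of intermediate graphs it produces; such a realization has $G_t$ isomorphic to $G$ precisely when this sequence reduces, upon forgetting labels, to some $P=(H_1,H_2,\ldots,H_t)\in\text{Path}(G)$. The count $t!/|\text{Aut}(G)|$ established in the earlier lemma ensures that the correspondence between monkey-runs ending at a copy of $G$ and elements of $\text{Path}(G)$ is one-to-one, so no realization is missed and none is counted twice.

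Each term is then computed by the chain rule. At the $i$-th step of Definition~\ref{def1}, the new vertex is assigned degree $k=d_i(P)$ with probability $1/i$ (since $k$ is uniform on $\{0,1,\ldots,i-1\}$) and then a specific $d_i(P)$-element neighborhood is selected with probability $1/\binom{i-1}{d_i(P)}$; independence across the $t$ steps yields
\[
\prod_{i=1}^{t}\frac{1}{i\binom{i-1}{d_{i}(P)}}=\frac{1}{t!\prod_{i=1}^{t}\binom{i-1}{d_{i}(P)}},
\]
which is exactly the formula derived in the paragraph immediately preceding the theorem. Summing over $P\in\text{Path}(G)$ gives the stated identity. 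In my view the main obstacle is not this probability computation but pinning down the bijection in the second step for graphs with non-trivial automorphism group, so that distinct realizations of the monkey ending at a copy of $G$ correspond bijectively to distinct elements of $\text{Path}(G)$; the lemma on representations is the tool that controls this.
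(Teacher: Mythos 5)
Your proposal is correct and takes essentially the same route as the paper, which obtains the theorem by computing the probability $\prod_{i=1}^{t}\frac{1}{i\binom{i-1}{d_{i}(P)}}=\frac{1}{t!\prod_{i=1}^{t}\binom{i-1}{d_{i}(P)}}$ of each individual path construction and summing over $\text{Path}(G)$ as a disjoint decomposition of the event $G_{t}=G$. One cosmetic remark: the correspondence between monkey runs ending at a copy of $G$ and elements of $\text{Path}(G)$ is immediate from the definitions (a run is determined by its labeled sequence of intermediate graphs), so the cardinality $t!/\left\vert \text{Aut}(G)\right\vert$ from the lemma is not really what guarantees the bijection, and invoking it is unnecessary rather than harmful.
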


A simple example is useful:

\begin{example}
Let $P_{3}$ be the path graph on $3$ vertices. By Example \ref{ext1}, we find
that
\begin{align*}
\Pr[G_{t} &  =G,\,(T_{1},T_{21},T_{31}%
)\mbox{ is a path construction of }G]=1\cdot\frac{1}{2}\cdot\frac{1}{3}%
=\frac{1}{6},\\
\Pr[G_{t} &  =G,\,(T_{1},T_{22},T_{32}%
)\mbox{ is a path construction of }G]=1\cdot\frac{1}{2}\cdot\frac{1}{3\cdot
2}=\frac{1}{12},\\
\Pr[G_{t} &  =G,\,(T_{1},T_{22},T_{33}%
)\mbox{ is a path construction of }G]=1\cdot\frac{1}{2}\cdot\frac{1}{3\cdot
2}=\frac{1}{12}.
\end{align*}
Then, $\mathcal{L}(G)=\frac{1}{6}+\frac{1}{12}+\frac{1}{12}=\frac{1}{3}$.
\end{example}

By Theorem \ref{thflg} and the fact that $|$Path$(G)|$ is exactly equal to the
number of representations of $G$, \emph{i.e.} $|$Path$(G)|=t!/\left\vert
\text{Aut}(G)\right\vert $, we obtain the following bounds:

\begin{corollary}
Let $G$ be any nonempty graph on $t$ vertices. Then
\[
\frac{1}{|\text{\emph{Aut}}(G)|\prod_{i=1}^{t}\binom{i-1}{\lfloor
(i-1)/2\rfloor}}\leq\mathcal{L}(G)\leq\frac{1}{|\emph{Aut}(G)|}.
\]

\end{corollary}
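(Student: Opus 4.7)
The plan is to derive both bounds directly from Theorem \ref{thflg}, which expresses $\mathcal{L}(G)$ as a sum over $\text{Path}(G)$, combined with the already-established identity $|\text{Path}(G)| = t!/|\text{Aut}(G)|$. Since every summand shares the factor $1/t!$ and differs only through $\prod_{i=1}^{t}\binom{i-1}{d_i(P)}$, the problem reduces to bounding this product uniformly in $P$, from above and from below, by quantities independent of $P$.

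For the upper bound, I would observe that each $d_i(P)$ lies in $\{0,1,\dots,i-1\}$, so $\binom{i-1}{d_i(P)} \geq 1$, hence $\prod_{i=1}^{t}\binom{i-1}{d_i(P)} \geq 1$. Substituting into the formula of Theorem \ref{thflg} gives
\[
\mathcal{L}(G) \;\leq\; \sum_{P \in \text{Path}(G)} \frac{1}{t!} \;=\; \frac{|\text{Path}(G)|}{t!} \;=\; \frac{1}{|\text{Aut}(G)|},
\]
which is the right-hand inequality.

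For the lower bound, I would use the standard fact that the central binomial coefficient is maximal, i.e.\ $\binom{n}{k} \leq \binom{n}{\lfloor n/2 \rfloor}$ for every $k \in \{0,1,\dots,n\}$. Applied termwise with $n=i-1$ and $k=d_i(P)$, this gives $\prod_{i=1}^{t}\binom{i-1}{d_i(P)} \leq \prod_{i=1}^{t}\binom{i-1}{\lfloor (i-1)/2\rfloor}$, and therefore each summand in Theorem \ref{thflg} is at least $1/\bigl(t!\prod_{i=1}^{t}\binom{i-1}{\lfloor (i-1)/2\rfloor}\bigr)$. Summing over $\text{Path}(G)$ and again replacing $|\text{Path}(G)|$ by $t!/|\text{Aut}(G)|$ yields the left-hand inequality.

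There is no real obstacle here: the argument is essentially bookkeeping once Theorem \ref{thflg} and the counting of path constructions are in place. The only conceptual input beyond the theorem is the uniform bound $1 \leq \binom{i-1}{d_i(P)} \leq \binom{i-1}{\lfloor (i-1)/2\rfloor}$, which is independent of the particular path construction $P$ and is what makes the two $P$-dependent sums collapse to multiples of $|\text{Path}(G)|/t! = 1/|\text{Aut}(G)|$.
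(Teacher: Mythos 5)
Your argument is correct and is essentially the paper's intended proof: the corollary is stated there as following from Theorem \ref{thflg} together with $|\mathrm{Path}(G)| = t!/|\mathrm{Aut}(G)|$, and the only extra ingredient is exactly your uniform bound $1 \leq \binom{i-1}{d_i(P)} \leq \binom{i-1}{\lfloor (i-1)/2\rfloor}$ applied termwise before summing over path constructions. Nothing is missing.
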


We give two general examples:

\begin{example}
Let $G$ be a graph on $t$ vertices with exactly $s$ edges incident with $2s$
vertices. Any path $P$ of Path$(G)$ can be seen as a path from a single vertex
to the graph $G$. At levels $i_{1},i_{2},\ldots,i_{s}$, we have added an edge
between the new vertex and a vertex of degree zero. In all other levels we
just added a new vertex. Therefore,
\[
\mathcal{L}(G)=\frac{1}{t!}\sum_{2\leq i_{1}<i_{2}<\cdots<i_{s}\leq t}%
\prod_{j=1}^{s}\frac{i_{j}+1-2j}{i_{j}-1}.
\]

\end{example}

\begin{example}
Let $G$ be a graph on $t$ vertices with exactly one edge, then
\[
\mathcal{L}(G)=\frac{1}{t!}\sum_{i=2}^{t}1=\frac{t-1}{t!}.
\]

\end{example}

\begin{example}
Let $G$ be a graph on $t$ vertices with exactly two edges incident on four
vertices (a matching with two edges), then
\[
\mathcal{L}(G)=\frac{1}{t!}\sum_{i=2}^{t}\left(  \frac{i-2}{i}+\frac{i-1}%
{i+1}+\cdots+\frac{n-3}{n-1}\right)  .
\]

\end{example}

\begin{figure}[pth]
\begin{center}
\begin{pspicture}(10,4)
\put(0,4){\pscircle*(0,0){0.06}\put(0.05,0){$1$}}
\put(3,4){\pscircle*(0,0){0.06}\pscircle*(0.5,0){0.06}\put(0.55,0){$\frac{1}{2}$}}
\put(6,4){\pscircle*(0,0){0.06}\pscircle*(0.5,0){0.06}\psline(0,0)(.5,0)\put(0.55,0){$\frac{1}{2}$}}
\put(9,4){\pscircle*(0,0){0.06}\pscircle*(0.5,0){0.06}\pscircle*(1,0){0.06}\put(1.05,0){$\frac{1}{6}$}}
\put(0,3){\pscircle*(0,0){0.06}\pscircle*(0.5,0){0.06}\pscircle*(1,0){0.06}\psline(0,0)(.5,0)\put(1.05,0){$\frac{1}{3}$}}
\put(3,3){\pscircle*(0,0){0.06}\pscircle*(0.5,0){0.06}\pscircle*(1,0){0.06}\psline(0,0)(1,0)\put(1.05,0){$\frac{1}{3}$}}
\put(6,3){\pscircle*(0,0){0.06}\pscircle*(0.5,.25){0.06}\pscircle*(1,0){0.06}\psline(0,0)(.5,.25)(1,0)(0,0)\put(1.05,0){$\frac{1}{6}$}}
\put(9,3){\pscircle*(0,0){0.06}\pscircle*(0.5,0){0.06}\pscircle*(1,0){0.06}\pscircle*(1.5,0){0.06}\put(1.55,0){$\frac{1}{24}$}}
\put(0,2){\pscircle*(0,0){0.06}\pscircle*(0.5,0){0.06}\pscircle*(1,0){0.06}\pscircle*(1.5,0){0.06}\psline(0,0)(.5,0)\put(1.55,0){$\frac{1}{8}$}}
\put(3,2){\pscircle*(0,0){0.06}\pscircle*(0.5,0){0.06}\pscircle*(1,0){0.06}\pscircle*(1.5,0){0.06}\psline(0,0)(.5,0)\psline(1,0)(1.5,0)\put(1.55,0){$\frac{1}{36}$}}
\put(6,2){\pscircle*(0,0){0.06}\pscircle*(0.5,0){0.06}\pscircle*(1,0){0.06}\pscircle*(1.5,0){0.06}\psline(0,0)(1,0)\put(1.55,0){$\frac{13}{72}$}}
\put(9,2){\pscircle*(0,0){0.06}\pscircle*(0.5,0){0.06}\pscircle*(1,0){0.06}\pscircle*(1.5,0){0.06}\psline(0,0)(1.5,0)\put(1.55,0){$\frac{1}{9}$}}
\put(0,1){\pscircle*(0,0){0.06}\pscircle*(0.5,0){0.06}\pscircle*(0,.25){0.06}\pscircle*(.5,.25){0.06}\psline(0,0)(.5,0)(.5,.25)(0,.25)(0,0)\put(0.55,0){$\frac{1}{36}$}}
\put(3,1){\pscircle*(0,0){0.06}\pscircle*(0.5,0){0.06}\pscircle*(.5,.25){0.06}\pscircle*(.5,-.25){0.06}\psline(0,0)(.5,0)\psline(0,0)(.5,.25)\psline(0,0)(.5,-.25)\put(0.55,0){$\frac{5}{72}$}}
\put(6,1){\pscircle*(0,0){0.06}\pscircle*(0.5,.25){0.06}\pscircle*(1,0){0.06}\pscircle*(1.5,0){0.06}\psline(0,0)(.5,.25)(1,0)(0,0)\put(1.55,0){$\frac{5}{72}$}}
\put(9,1){\pscircle*(0,0){0.06}\pscircle*(0.5,.25){0.06}\pscircle*(1,0){0.06}\pscircle*(1.5,0){0.06}\psline(0,0)(.5,.25)(1,0)(0,0)(1.5,0)\put(1.55,0){$\frac{13}{72}$}}
\put(3,0){\pscircle*(0,0){0.06}\pscircle*(0.5,.25){0.06}\pscircle*(1,0){0.06}\pscircle*(1.5,0){0.06}\psline(0,0)(.5,.25)(1,0)(0,0)(1.5,0)(.5,.25)\put(1.55,0){$\frac{1}{8}$}}
\put(6,0){\pscircle*(0,0){0.06}\pscircle*(0.5,0){0.06}\pscircle*(0,.25){0.06}\pscircle*(.5,.25){0.06}\psline(0,0)(.5,0)(.5,.25)(0,.25)(0,0)(.5,.25)(0,.25)(.5,0)\put(0.55,0){$\frac{1}{24}$}}
\end{pspicture}
\end{center}
\caption{All non-isomorphic graphs on $t$ vertices, where $t\leq4$, and their
likelihood. }%
\end{figure}
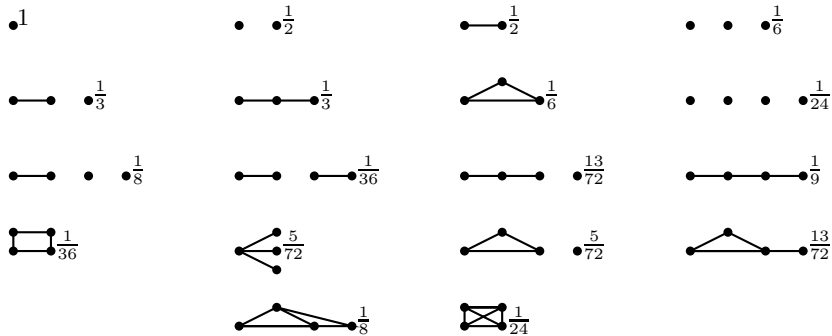

By making use of Theorem \ref{thflg}, we can prove in a straightforward way
that a graph and its complement have equal likelihood. The \emph{complement}
of a graph $G=(V,E)$, denoted by $\overline{G}$, is the graph such that
$V(\overline{G})=V(G)$ and $E(G)=V(G)\times V(G)-\{\{v_{i},v_{i}\}:v_{i}\in
V(G)\}-E(G)$.

\begin{proposition}
Let $G$ be any graph. Then $\mathcal{L}(G)=\mathcal{L}(\overline{G})$.
\end{proposition}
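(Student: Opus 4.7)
The plan is to apply Theorem~\ref{thflg} and to exhibit a bijection
$\Phi : \text{Path}(G) \longrightarrow \text{Path}(\overline{G})$ under
which the weights appearing in the sum of Theorem~\ref{thflg} are preserved.
The whole argument then reduces to the binomial symmetry
$\binom{i-1}{k}=\binom{i-1}{(i-1)-k}$.

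First I would define $\Phi$ by complementing at every level: for
$P=(H_{1},H_{2},\ldots,H_{t})\in\text{Path}(G)$, set
$\Phi(P):=(\overline{H_{1}},\overline{H_{2}},\ldots,\overline{H_{t}})$,
where $\overline{H_{i}}$ denotes the complement of $H_{i}$ on the vertex
set $V(H_{i})$. To check that $\Phi(P)$ actually belongs to
$\text{Path}(\overline{G})$, the key observation is that along a path
construction $H_{i}$ is the \emph{induced} subgraph of $H_{i+1}$ on
$V(H_{i})$: according to Definition~\ref{def1} the step from $H_{i}$ to
$H_{i+1}$ only introduces a new vertex with edges to some of the existing
ones, and never creates a new edge between two old vertices. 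Therefore any
non-edge of $H_{i}$ remains a non-edge of $H_{i+1}$, so $\overline{H_{i}}$
is an induced subgraph of $\overline{H_{i+1}}$; clearly
$\overline{H_{t}}=\overline{G}$, and $\Phi\circ\Phi=\mathrm{id}$, so $\Phi$
is a bijection.

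Next I would compare the weights. If at step $i$ the path $P$ adds a vertex
of degree $d_{i}(P)\in\{0,1,\ldots,i-1\}$, then $\Phi(P)$ adds the same
vertex but joined precisely to the complementary subset of $V(H_{i-1})$, so
\[
d_{i}(\Phi(P))=(i-1)-d_{i}(P), \qquad i=1,2,\ldots,t.
\]
The identity
$\binom{i-1}{d_{i}(P)}=\binom{i-1}{(i-1)-d_{i}(P)}$ then shows that the
summand of Theorem~\ref{thflg} indexed by $P$ equals the one indexed by
$\Phi(P)$, and summing over $P\in\text{Path}(G)$ yields
$\mathcal{L}(G)=\mathcal{L}(\overline{G})$.

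The only delicate point in this plan is the well-definedness of $\Phi$,
namely the induced-subgraph remark above; once this is established, the
remainder is pure bookkeeping together with the symmetry of the binomial
coefficients.
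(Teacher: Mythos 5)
Your proof is correct and follows exactly the route the paper intends: the paper gives no explicit argument, stating only that the result follows ``in a straightforward way'' from Theorem~\ref{thflg}, and your complementation bijection on $\text{Path}(G)$ together with $d_{i}(\Phi(P))=(i-1)-d_{i}(P)$ and the symmetry $\binom{i-1}{k}=\binom{i-1}{i-1-k}$ is precisely the straightforward argument being alluded to. Your attention to the induced-subgraph point (well-definedness of $\Phi$) is a welcome detail that the paper leaves implicit.
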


\section{Conclusions}

We have used a model of graph growth to introduce a notion of graph likelihood
and we have then discussed some of its basic aspects. This is the probability
that a graph is grown with the model. We have proposed an algorithm for the
computation of the likelihood, and we have bounded this graph invariant in
terms of the automorphism group. We conclude with two natural open problems:

\begin{problem}
How hard is to compute the likelihood?
\end{problem}

\begin{problem}
Which graphs are extremal with respect to the likelihood?
\end{problem}

\bigskip

\emph{Acknowledgments.} We would like to thank Ginestra Bianconi, Sebi Cioaba,
Chris Godsil, Anastasia Koroto, Matt DeVos, and Svante Janson.

\end{document}